\newtheorem{theorem}{Theorem}
\newtheorem{proposition}[theorem]{Proposition}
\theoremstyle{definition}
\newcommand{\pv}{\mathbf{p}}
\newcommand{\qv}{\mathbf{q}}
\DeclarePairedDelimiterX{\inp}[2]{\langle}{\rangle}{#1, #2}
\newcommand*\bigcdot{\mathpalette\bigcdot@{.5}}
\newcommand*\bigcdot@[2]{\mathbin{\vcenter{\hbox{\scalebox{#2}{$\m@th#1\bullet$}}}}}
\newcommand{\muspace}{\mspace{1mu}}
\DeclareRobustCommand{\scond}{\mathchoice{\muspace\vert\muspace}{\vert}{\vert}{\vert}}
\DeclareRobustCommand{\discint}{\mathchoice{\mspace{-1.5mu}:\mspace{-1.5mu}}{\mspace{-1.5mu}:\mspace{-1.5mu}}{:}{:}}
\def\Var{\mathop{\rm Var}\nolimits}%
\def\Cov{\mathop{\rm Cov}\nolimits}%
\newcommand{\Xv}{{\bf X}}
\newcommand{\Uv}{{\bf U}}
\newcommand{\Wv}{{\bf W}}
\newcommand{\Vv}{{\bf V}}
\newcommand{\xv}{{\bf x}}
\def\d{\delta}
\DeclareMathOperator\E{\mathsf{E}}
\newcommand\eg{e.g.,\xspace}
\def\textiid{i.i.d.\@\xspace}
\newcommand\iid{\ifmmode\text{ i.i.d. } \else \textiid \fi}
\newcommand{\Real}{\mathbb{R}}
\def\mathllap{\mathpalette\mathllapinternal}
\def\mathllapinternal#1#2{%
  \llap{$\mathsurround=0pt#1{#2}$}}
\def\clap#1{\hbox to 0pt{\hss#1\hss}}
\def\mathclap{\mathpalette\mathclapinternal}
\def\mathclapinternal#1#2{%
  \clap{$\mathsurround=0pt#1{#2}$}}
\let\oldstackrel\stackrel
\renewcommand{\stackrel}[2]{\oldstackrel{\mathclap{#1}}{#2}}
\DeclarePairedDelimiterX{\infdivx}[2]{(}{)}{%
  #1\;\delimsize\|\;#2%
}
\renewcommand{\hbar}{h\mathllap{\overline{\vphantom{h}\hphantom{\rule{4.6pt}{0pt}}}\mspace{0.77mu}}}
\newcommand{\urltilde}{\kern -.06em\lower -.06em\hbox{~}\kern .02em}
\newcommand*\diff{\mathop{}\!\mathrm{d}}
\let\oldpartial\partial
\renewcommand*{\partial}{\mathop{}\!\oldpartial}
\newcommand{\defeq}{\mathrel{\mathop{:}}=}
\title{An Information-Theoretic Proof of the Kac--Bernstein Theorem}
\author{J. Jon Ryu and Young-Han Kim\vspace{.3em}\\
Department of Electrical and Computer Engineering, UC San Diego\vspace{.3em}\\
\texttt{\{jongharyu,yhk\}@ucsd.edu}}
\begin{document}
\maketitle
\begin{abstract}
A short, information-theoretic proof of the Kac--Bernstein theorem, which is stated as follows, is presented: For any independent random variables $X$ and $Y$, if $X+Y$ and $X-Y$ are independent, then $X$ and $Y$ are normally distributed.
\end{abstract}

\section{Introduction}
The following statement is arguably the simplest characterization of normal distributions based on the independence of linear statistics.
\begin{theorem}[Kac--Bernstein~\cite{Kac1939,Bernstein1941}]
\label{thm:Kac_Bernstein}
Let $X$ and $Y$ be independent random variables with finite variances.
If $X+Y$ and $X-Y$ are independent, then $X$ and $Y$ are normally distributed.
\end{theorem}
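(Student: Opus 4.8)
The plan is to run an entropy-power argument on a Gaussian-smoothed version of the problem, so that the only nontrivial input is the equality case of the entropy power inequality (EPI). Writing $h$ for differential entropy, the skeleton is: (i) regularize $X,Y$ by adding small independent Gaussian noise so that all relevant differential entropies are finite while both independence hypotheses are preserved; (ii) convert the two independence statements into an exact entropy identity via the Jacobian of $(x,y)\mapsto(x+y,x-y)$; (iii) apply the EPI to $X+Y$ and to $X-Y$ and combine with the identity to force equality in both EPIs; (iv) read off normality from the EPI equality condition; and (v) deconvolve to remove the noise.

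For step (i), fix $\epsilon>0$ and set $X_\epsilon=X+\epsilon Z_1$ and $Y_\epsilon=Y+\epsilon Z_2$ with $Z_1,Z_2\sim\N(0,1)$ independent of each other and of $(X,Y)$. Then $X_\epsilon$ and $Y_\epsilon$ are independent, and -- crucially -- $X_\epsilon+Y_\epsilon=(X+Y)+\epsilon(Z_1+Z_2)$ and $X_\epsilon-Y_\epsilon=(X-Y)+\epsilon(Z_1-Z_2)$ are independent as well. Indeed $X+Y$, $X-Y$, $Z_1+Z_2$, $Z_1-Z_2$ are mutually independent, since $X+Y$ and $X-Y$ are independent by hypothesis, $Z_1+Z_2$ and $Z_1-Z_2$ are uncorrelated and jointly Gaussian hence independent, and the noise block is independent of the data block; so the pair $(X+Y,\,Z_1+Z_2)$ is independent of $(X-Y,\,Z_1-Z_2)$, whence the two linear statistics of $X_\epsilon,Y_\epsilon$ inherit independence. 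Because $X_\epsilon,Y_\epsilon$ have bounded densities (Gaussian convolution) and finite variances, their differential entropies -- and those of $X_\epsilon\pm Y_\epsilon$ -- are finite, the finite variance giving the max-entropy upper bound and the bounded density the lower bound.

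For steps (ii)--(iv), the linear map $(X_\epsilon,Y_\epsilon)\mapsto(X_\epsilon+Y_\epsilon,\,X_\epsilon-Y_\epsilon)$ has Jacobian determinant of absolute value $2$, so $h(X_\epsilon+Y_\epsilon,\,X_\epsilon-Y_\epsilon)=h(X_\epsilon,Y_\epsilon)+\log 2$; expanding both joint entropies using the two independence statements gives
\begin{equation*}
h(X_\epsilon+Y_\epsilon)+h(X_\epsilon-Y_\epsilon)=h(X_\epsilon)+h(Y_\epsilon)+\log 2 .
\end{equation*}
Applying the EPI to $X_\epsilon+Y_\epsilon$ and to $X_\epsilon-Y_\epsilon=X_\epsilon+(-Y_\epsilon)$ (using $h(-Y_\epsilon)=h(Y_\epsilon)$) and multiplying the two inequalities yields $e^{2h(X_\epsilon+Y_\epsilon)}e^{2h(X_\epsilon-Y_\epsilon)}\ge\bigl(e^{2h(X_\epsilon)}+e^{2h(Y_\epsilon)}\bigr)^2$, while the displayed identity forces the left-hand side to equal $4\,e^{2h(X_\epsilon)}e^{2h(Y_\epsilon)}$. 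Setting $a=e^{2h(X_\epsilon)}$ and $b=e^{2h(Y_\epsilon)}$, this reads $4ab\ge(a+b)^2$, i.e. $(a-b)^2\le 0$, so $a=b$ and both EPIs hold with equality. The equality condition of the EPI then forces $X_\epsilon$ and $Y_\epsilon$ to be normal.

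Finally, for step (v): $X_\epsilon=X+\epsilon Z_1$ is normal and $\epsilon Z_1$ is an independent normal, so dividing characteristic functions gives $\phi_X(t)=\phi_{X_\epsilon}(t)\,e^{\epsilon^2 t^2/2}$, which is of Gaussian form with variance $\var(X)=\var(X_\epsilon)-\epsilon^2\ge 0$; hence $X$ is normal, and symmetrically so is $Y$. (Alternatively one may invoke Cram\'er's decomposition theorem.) I expect the genuine content to sit in step (iv): the EPI itself is standard, but its equality case -- that $N(X+Y)=N(X)+N(Y)$ forces both summands to be Gaussian -- is the deep ingredient, and the smoothing of step (i) is precisely what makes all entropies finite so that this machinery applies to arbitrary, possibly discrete or singular, $X$ and $Y$.
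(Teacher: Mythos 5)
Your proof is correct and its skeleton coincides with the paper's: apply the EPI to $X+Y$ and to $X-Y=X+(-Y)$, use the two independence hypotheses together with the Jacobian of $(x,y)\mapsto(x+y,x-y)$ to obtain $h(X+Y)+h(X-Y)=h(X)+h(Y)+\log 2$, and observe that this forces equality in both EPIs (your $(a-b)^2\le 0$ computation is exactly the paper's AM--GM step), whence normality follows from the EPI equality condition. The genuine difference lies in your steps (i) and (v): the paper works directly with $X$ and $Y$ and thereby tacitly assumes they have densities whose differential entropies (and those of $X\pm Y$ and of the joint vectors) are well defined and finite, which does not follow from finite variance alone---$X$ could be discrete or singular. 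Your Gaussian smoothing makes every entropy in the chain finite while preserving both independence hypotheses (your two-block factorization showing that $X+Y$, $X-Y$, $Z_1+Z_2$, $Z_1-Z_2$ are mutually independent, hence that $X_\epsilon+Y_\epsilon$ and $X_\epsilon-Y_\epsilon$ are independent, is correct), and the concluding deconvolution by characteristic functions (or Cram\'er's theorem) transfers normality from $X_\epsilon,Y_\epsilon$ back to $X,Y$. So your argument buys full generality and rigor at the cost of two routine extra steps, whereas the printed proof is the same argument implicitly restricted to sufficiently regular $X$ and $Y$.
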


The most standard proof known in the literature is based on the characteristic function, see, \eg \citep{Kagan--Linnik--Rao1973,Wlodzimierz1995}.
In this note, we provide an alternative proof of this statement using tools from information theory.

\section{Preliminaries}
For a random variable $\Xv\in\Real^d$ with density $p(\xv)$, the differential entropy is defined as
\[
h(\Xv)\defeq \int p(\xv)\ln\frac{1}{p(\xv)}\diff \xv.
\]

\begin{theorem}[Entropy power inequality~\cite{Shannon1948,Stam1959,Cover--Thomas2006}]
\label{thm:EPI}
For any independent random vectors $\Uv$ and $\Vv$ in $\Real^d$, we have 
\[
e^{\frac{2}{d}h(\Uv+\Vv)} \ge e^{\frac{2}{d}h(\Uv)}+e^{\frac{2}{d}h(\Vv)},
\]
where the equality holds if and only if $\Uv$ and $\Vv$ are normal with proportional covariance matrices.
\end{theorem}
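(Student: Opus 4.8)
The plan is to follow the classical route of Stam and Blachman, deriving the inequality from the \emph{Fisher information inequality} (FII) by means of de Bruijn's identity and an integration along the heat flow. For a random vector $\Xv\in\Real^d$ with smooth, rapidly decaying density $p$, write $N(\Xv)\defeq e^{\frac{2}{d}h(\Xv)}$ for its entropy power and $J(\Xv)\defeq\E[\norm{\nabla\ln p(\Xv)}^2]$ for its Fisher information (with respect to a location parameter). Two facts drive the argument. First, \textbf{de Bruijn's identity}: for $\Zv\sim\N(\mathbf 0,\mathbf I_d)$ independent of $\Xv$ and $\Xv_t\defeq\Xv+\sqrt t\,\Zv$, one has $\frac{\diff}{\diff t}h(\Xv_t)=\frac12 J(\Xv_t)$, equivalently $\frac{\diff}{\diff t}N(\Xv_t)=\frac1d N(\Xv_t)\,J(\Xv_t)$, which follows from the heat equation satisfied by the density of $\Xv_t$ together with an integration by parts. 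Second, the \textbf{FII}: for independent $\Uv,\Vv$,
\[
\frac{1}{J(\Uv+\Vv)}\ge\frac{1}{J(\Uv)}+\frac{1}{J(\Vv)}.
\]

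To prove the FII, set $\Wv=\Uv+\Vv$, whose density is the convolution $p_\Uv * p_\Vv$. Differentiating under the integral sign yields the key identity that the score of $\Wv$ is a conditional expectation of the summand scores, $\nabla\ln p_\Wv(\wv)=\E[\nabla\ln p_\Uv(\Uv)\mid\Wv=\wv]=\E[\nabla\ln p_\Vv(\Vv)\mid\Wv=\wv]$. Hence for every $\lambda\in[0,1]$ we have $\nabla\ln p_\Wv(\Wv)=\E[\lambda\nabla\ln p_\Uv(\Uv)+(1-\lambda)\nabla\ln p_\Vv(\Vv)\mid\Wv]$. Since conditional expectation is an $L^2$-contraction and the two scores are uncorrelated (each has zero mean, by independence), taking squared norms gives $J(\Wv)\le\lambda^2 J(\Uv)+(1-\lambda)^2 J(\Vv)$, and optimizing over $\lambda$ produces the FII.

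Next I would run two coupled heat flows with carefully matched internal clocks. Define increasing schedules $u(t),v(t)$ by the coupled equations $u'(t)=1/J(\Uv_{u(t)})$ and $v'(t)=1/J(\Vv_{v(t)})$ with $u(0)=v(0)=0$, where $\Uv_{u(t)}=\Uv+\sqrt{u(t)}\,\Zv_1$ and $\Vv_{v(t)}=\Vv+\sqrt{v(t)}\,\Zv_2$ for independent standard Gaussians; then $\Wv_t\defeq\Uv_{u(t)}+\Vv_{v(t)}\stackrel{d}{=}(\Uv+\Vv)+\sqrt{u(t)+v(t)}\,\Zv$. By de Bruijn's identity and the chosen clocks, $N(\Uv_{u(t)})=N(\Uv)e^{t/d}$ and $N(\Vv_{v(t)})=N(\Vv)e^{t/d}$, while for $g(t)\defeq e^{-t/d}N(\Wv_t)$ the FII gives $J(\Wv_t)\bigl(u'(t)+v'(t)\bigr)\le 1$, hence $g'(t)\le0$. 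Thus $g$ is non-increasing, so $N(\Uv+\Vv)=g(0)\ge\lim_{t\to\infty}g(t)$. Since $N(\Xv+\sqrt s\,\Zv)\sim 2\pi e\,s$ as $s\to\infty$, the two summand identities force $u(t)+v(t)\sim\frac{N(\Uv)+N(\Vv)}{2\pi e}e^{t/d}$ and therefore $\lim_{t\to\infty}g(t)=N(\Uv)+N(\Vv)$, which is exactly the claimed inequality.

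The hard part will be twofold. On the technical side, de Bruijn's identity and the FII require smooth densities with finite Fisher information and justification of differentiation under the integral; I would handle this by first adding an infinitesimal Gaussian regularizer, proving the inequality for the smoothed variables, and then passing to the limit using continuity of differential entropy under vanishing Gaussian perturbation. The genuinely delicate step is the equality characterization: the converse (normal with proportional covariances implies equality) is a direct computation, but equality in the main inequality forces $g'\equiv0$, hence equality in the FII at every $t$, i.e.\ tightness of the $L^2$-contraction. This means the optimal convex combination of the two scores must be measurable with respect to their sum, a functional equation whose only solutions have affine scores; tracing this back through the flow (and letting $t\to0$) yields that $\Uv$ and $\Vv$ are Gaussian, with the matching of the FII multipliers pinning down the proportionality of their covariance matrices.
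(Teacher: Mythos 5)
The paper offers no proof of Theorem~\ref{thm:EPI} at all: it is quoted as a known preliminary, with the inequality and its equality condition attributed to Shannon, Stam, and Cover--Thomas. So there is nothing internal to compare against; what you have produced is a reconstruction of the standard Stam--Blachman argument, and its outline is correct. The score-as-conditional-expectation identity, the uncorrelatedness of the two zero-mean scores giving $J(\Uv+\Vv)\le\lambda^2J(\Uv)+(1-\lambda)^2J(\Vv)$ and hence the Fisher information inequality after optimizing $\lambda$, de Bruijn's identity, the two-clock parametrization making $N(\Uv_{u(t)})=N(\Uv)e^{t/d}$, the monotonicity of $g(t)=e^{-t/d}N(\Wv_t)$, and the limit $\lim_{t\to\infty}g(t)=N(\Uv)+N(\Vv)$ via $N(\Xv+\sqrt{s}\,\Zv)\sim 2\pi e\,s$ all check out (the last asymptotic needs finite covariance, which is harmless since the paper only ever invokes the EPI for finite-variance variables). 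Two caveats on where the real work lies. First, the regularization-and-limit step is not free: continuity of differential entropy under vanishing Gaussian perturbation requires some care (finiteness of $h$ and of second moments), so ``passing to the limit'' deserves more than a clause. Second, and more importantly for this paper, the equality characterization is the part the Kac--Bernstein argument actually leans on, and your sketch of it is the thinnest: you need equality in the FII for the smoothed variables at every $t>0$, then the Pexider-type functional equation $f(\uv)+g(\vv)=F(\uv+\vv)$ forcing both scores to be affine with matching linear parts (which is what yields proportional covariances), and finally a deconvolution step (characteristic functions or Cram\'er's theorem) to transfer Gaussianity from $\Uv+\sqrt{u}\,\Zv$ back to $\Uv$ at $t=0$. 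You name each of these ingredients correctly, but a complete proof would have to execute them; as submitted this is a sound and well-organized plan rather than a finished proof.
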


\noindent We will invoke this inequality with $d=1$: for any independent random variables $U$ and $V$,
\[
e^{2h(U+V)} \ge e^{2h(U)} + e^{2h(V)}.
\]

Further, note that
\begin{proposition}
\label{prop:scale}
$h(A\Xv)=h(\Xv)+\ln|\det(A)|$ for any nonsingular matrix $A\in\Real^{d\times d}$.
\end{proposition}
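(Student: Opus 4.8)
The plan is to reduce the claim to the standard change-of-variables formula for probability densities. First I would recall that, since $A$ is nonsingular, the linear map $\xv \mapsto A\xv$ is a diffeomorphism of $\Real^d$ with constant Jacobian determinant $\det A$, so that if $\Xv$ has density $p$, then $\Yv \defeq A\Xv$ has density
\[
q(\yv) = \frac{p(A^{-1}\yv)}{|\det A|}.
\]
This transformation rule is the only analytic input required; everything else is bookkeeping.

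Next I would substitute $q$ into the definition of differential entropy and perform the change of variable $\yv = A\xv$, under which $\diff\yv = |\det A|\,\diff\xv$. After this substitution one has $q(\yv) = p(\xv)/|\det A|$, so the Jacobian factor coming from $\diff\yv$ cancels the $1/|\det A|$ sitting in $q$, while $\ln(1/q(\yv))$ splits as $\ln(1/p(\xv)) + \ln|\det A|$. The result is
\[
h(A\Xv) = \int p(\xv)\Bigl(\ln\frac{1}{p(\xv)} + \ln|\det A|\Bigr)\diff\xv.
\]

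Finally, I would split this integral into two pieces: the first term reproduces exactly $h(\Xv)$, while in the second the constant $\ln|\det A|$ factors out of $\int p(\xv)\,\diff\xv = 1$, giving $h(\Xv) + \ln|\det A|$ as claimed.

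I do not expect a genuine obstacle here, since the computation is routine. The one point deserving care is the absolute value on the Jacobian: the density transformation uses $|\det A|$ rather than $\det A$, which is precisely what yields $\ln|\det A|$ and keeps the identity valid for orientation-reversing $A$. One should also note the standing assumption that $\Xv$ admits a density, so that $h(\Xv)$ is defined; under that assumption $A\Xv$ automatically has the density displayed above.
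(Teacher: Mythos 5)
Your proof is correct and is the standard change-of-variables argument one would give for this fact; the paper itself states the proposition without proof, treating it as a known identity. No gaps --- the care you take with $|\det A|$ versus $\det A$ and the standing assumption that $\Xv$ has a density are exactly the right points to flag.
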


\section{Proof of Theorem~\ref{thm:Kac_Bernstein}}
We are now ready to prove Theorem~\ref{thm:Kac_Bernstein}.
Without loss of generality, assume that $X$ and $Y$ have mean zero and variance one.
We apply the entropy power inequality (Theorem~\ref{thm:EPI}) with $(U,V)=(X,Y)$ and $(U,V)=(X,-Y)$:
\begin{align*}
e^{2h(X+Y)}\ge e^{2h(X)} + e^{2h(Y)},\\
e^{2h(X-Y)}\ge e^{2h(X)} + e^{2h(Y)}.
\end{align*}
Taking a product with square root, we have
\begin{align}
e^{h(X+Y)+h(X-Y)} \ge e^{2h(X)} + e^{2h(Y)} \ge 2e^{h(X)+h(Y)},
\label{eq:interim}
\end{align}
where the last inequality follows from AM-GM inequality.
Since $X+Y$ and $X-Y$ are independent, using Proposition~\ref{prop:scale}, we have
\begin{align*}
h(X+Y)+h(X-Y) &= h(X+Y,X-Y) \\
&= h\Bigl(\begin{bmatrix}1 & 1 \\ 1 & -1\end{bmatrix}\begin{bmatrix}X \\ Y\end{bmatrix}\Bigr) \\
&=  h(X,Y) + \log 2\\
&=h(X)+h(Y)+\log 2.
\end{align*}
Therefore, we have
\[
e^{h(X+Y)+h(X-Y)}=2e^{h(X)+h(Y)},
\]
which implies that the inequalities in \eqref{eq:interim} must hold with equality.
In particular, from the equality condition of the entropy power inequality, $X$ and $Y$ must be normal.\qed

\section{An Extension}
By the same argument in the previous proof, we can prove a quick generalization.

\begin{theorem}
Let $\Xv=(X_1,\ldots,X_n)$ be a random vector whose entries are mutually independent with finite variances. 
For a matrix $Q\in\Real^{n\times n}$ with nonzero entries,
let $\Wv=(W_1,\ldots,W_n)=Q\Xv$.
If $W_1,\ldots,W_n$ are independent, then $X_1,\ldots,X_n$ are normally distributed.
\end{theorem}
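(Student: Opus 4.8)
The plan is to run the same entropy-power argument with $n$ linear forms, replacing the AM--GM step of the original proof by \emph{Hadamard's inequality}. First note that we may take $Q$ nonsingular, since a singular $Q$ would produce a nontrivial linear relation $\sum_i c_i W_i = 0$ among the $W_i$, impossible for independent non-degenerate variables. Set $v_j \defeq e^{2h(X_j)}$ and $w_i \defeq e^{2h(W_i)}$. Applying Proposition~\ref{prop:scale} to $\Wv = Q\Xv$ and using independence on both sides gives $\sum_i h(W_i) = h(\Wv) = h(\Xv) + \ln|\det Q| = \sum_j h(X_j) + \ln|\det Q|$, that is,
\[
\prod_i w_i = (\det Q)^2 \prod_j v_j .
\]
On the other hand, writing $W_i = \sum_j Q_{ij}X_j$ as a sum of $n$ independent variables and applying the entropy power inequality (Theorem~\ref{thm:EPI}), together with $h(Q_{ij}X_j) = h(X_j) + \ln|Q_{ij}|$, gives the termwise bounds
\[
w_i \ge \sum_j Q_{ij}^2\, v_j, \qquad i = 1, \ldots, n .
\]

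The key is the matrix inequality $\prod_i \bigl( \sum_j Q_{ij}^2 v_j \bigr) \ge (\det Q)^2 \prod_j v_j$, which plays the role that AM--GM played before (and reduces to it when $n=2$). I would prove it by setting $D \defeq \diag(\sqrt{v_1}, \ldots, \sqrt{v_n})$ and $M \defeq QD$: the $i$-th row $\mv_i$ of $M$ satisfies $\norm{\mv_i}^2 = \sum_j Q_{ij}^2 v_j$, while $(\det M)^2 = (\det Q)^2 \prod_j v_j$, so the claim is precisely Hadamard's inequality $(\det M)^2 \le \prod_i \norm{\mv_i}^2$.

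Chaining the three displays forces every inequality to be an equality:
\[
(\det Q)^2 \prod_j v_j = \prod_i w_i \ge \prod_i \Bigl( \sum_j Q_{ij}^2 v_j \Bigr) \ge (\det Q)^2 \prod_j v_j .
\]
Since $w_i \ge \sum_j Q_{ij}^2 v_j$ holds termwise with positive entries, equality of the products forces $w_i = \sum_j Q_{ij}^2 v_j$ for every $i$; that is, the entropy power inequality for $W_i = \sum_j Q_{ij}X_j$ is tight. As every entry $Q_{ij}$ is nonzero, the equality condition of Theorem~\ref{thm:EPI} then makes each summand $Q_{ij}X_j$, and hence each $X_j$, normal.

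The main obstacle I anticipate is the equality case of the $n$-term entropy power inequality, since the excerpt records only the two-term version. I would recover it by induction on the number of summands: the overall equality telescopes into tight two-term steps, and the two-term equality condition (all summands normal) propagates to normality of every summand. The only remaining care is to confirm that all differential entropies in play are finite, which the finite-variance hypothesis guarantees.
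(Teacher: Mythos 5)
Your proof is correct, and it shares the paper's skeleton: bound each $e^{2h(W_i)}$ from below via the entropy power inequality applied to the row decomposition $W_i=\sum_j Q_{ij}X_j$, take the product over $i$, and play the result off against the exact identity $\sum_i h(W_i)=h(\Wv)=h(\Xv)+\ln|\det Q|=\sum_j h(X_j)+\ln|\det Q|$. Where you genuinely diverge is in the key lemma that closes the sandwich. The paper first normalizes the variances and argues that $Q$ may be taken orthogonal, then applies Jensen's inequality (weighted AM--GM with weights $q_{ij}^2$ summing to one along both rows and columns); you keep $Q$ an arbitrary nonsingular matrix and invoke Hadamard's inequality for $M=Q\,\diag(\sqrt{v_1},\ldots,\sqrt{v_n})$, which is exactly the needed bound $\prod_i\bigl(\sum_j Q_{ij}^2 v_j\bigr)\ge(\det Q)^2\prod_j v_j$. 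For orthogonal $Q$ the two lemmas coincide (Hadamard applied to $QD$ is the product of the row-wise weighted AM--GM bounds), but your route buys something concrete: it sidesteps the orthogonality reduction, which the paper states a bit too quickly --- independence of the $W_i$ only makes $QQ^T$ diagonal, and one must still rescale the rows (harmless, but unstated). Your explicit telescoping treatment of the equality case of the $n$-term EPI is likewise a detail the paper uses implicitly without spelling out. The one caveat, which you share with the paper, is that finite variance bounds the differential entropies from above but does not by itself guarantee that densities exist or that $h(X_j)>-\infty$; both arguments tacitly assume the entropies in play are well defined and finite.
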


\begin{proof}
Without loss of generality, assume $\E[X_1]=\ldots=\E[X_n]=0$ and $\Var(X_1)=\ldots=\Var(X_n)=1$. 
Since $W_1,\ldots,W_n$ are independent, the matrix $Q$ must be orthogonal since $\Cov(W_i,W_j)=\E[W_iW_j]=\d_{ij}$. 
Let $\qv_i$ be the $i$-th column of $Q$.
We will apply the EPI to each component of $\Wv$:
\begin{align*}
e^{2h(W_i)} 
&\ge e^{2h(q_{i1} X_1)} + \cdots + e^{2h(q_{in} X_n)}
\\&
= q_{i1}^2 e^{2h(X_1)}+\cdots+q_{in}^2 e^{2h(X_n)}
\\&
\stackrel{(a)}{\ge} e^{2 (q_{i1}^2 h(X_1)+\cdots+q_{in}^2 h(X_n))}.
\end{align*}
Here, $(a)$ follows from Jensen's inequality.
Taking the product of the inequalities for $i=1,\ldots,n$, we have
\begin{align*}
\text{(LHS)}
= e^{2(h(W_1) + \cdots + h(W_n))}
= e^{2h(\Wv)}
= e^{2h(Q\Xv)}
= e^{2h(\Xv)}
\end{align*}
and, since $\qv_1,\ldots,\qv_n$ are unit vectors,
\begin{align*}
\text{(RHS)}
= e^{2((q_{11}^2 + \cdots + q_{n1}^2) h(X_1) + \cdots + (q_{1n}^2 + \cdots + q_{nn}^2) h(X_n))}
= e^{2(h(X_1) + \cdots + h(X_n))}
= e^{2h(\Xv)}.
\end{align*}
Therefore, all inequalities must hold with equality.
Since all entries of $Q$ are nonzero, $W_1,\ldots,W_n$ must be normal, and thus so are $X_1,\ldots,X_n$.
\end{proof}

\section{Remarks}
There exist a few information-theoretic proofs of the Kac--Bernstein theorem in the literature. 
\citet{Itoh1970} proved a weaker statement using a rather complicated limit argument. 
Recently, \citet{Rioul2017} showed that the theorem immediately follows as a corollary from the equality condition of a certain form of reverse entropy power inequality. 
Albeit the idea might be identical in disguise, we believe that the argument in this paper is more direct.

Indeed, much stronger generalizations of Theorem~\ref{thm:Kac_Bernstein} exist; see \citep{Kagan--Linnik--Rao1973} for a comprehensive treatment of such statements.
One stated below, which was independently discovered by Darmois~\cite{Darmois1953} and Skitovich~\cite{Skitovitch1953}, is the most popular among them.
Remarkably, even for $n\ge 2$ independent random variables, only independence of two linear statistics implies normality of individual components. 
\begin{theorem}[Darmois--Skitovich~\cite{Darmois1953,Skitovitch1953}]
Let $\Xv=(X_1,\ldots,X_n)$ be a random vector whose entries are mutually independent with finite variances. 
Let $\pv_1,\pv_2\in\Real^n$ be vectors with nonzero entries. 
If $W_1=\pv_1^T\Xv$ and $W_2=\pv_2^T\Xv$ are independent, then $X_1,\ldots,X_n$ are normally distributed.
\end{theorem}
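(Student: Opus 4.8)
The plan is to abandon the entropy-power equality method used above: with only two linear statistics the map $\Xv \mapsto (W_1,W_2)$ sends $\Real^n$ to $\Real^2$, each summand $X_i(p_{1i},p_{2i})$ is supported on a line and hence has differential entropy $-\infty$, and there is no dimension-preserving orthogonal change of variables that would turn the entropy-power inequalities into a chain forced to be tight. Instead I would follow the classical characteristic-function route, which is insensitive to this dimension mismatch. After normalizing each $X_i$ to zero mean and unit variance, independence of $W_1$ and $W_2$ gives $\Cov(W_1,W_2)=\pv_1^{\top}\pv_2=0$, so the two coefficient vectors are orthogonal; this is the only structural fact I extract before passing to transforms.

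Writing $f_i(t)=\E[e^{itX_i}]$ and using $uW_1+vW_2=\sum_i (u p_{1i}+v p_{2i})X_i$, independence of $W_1$ and $W_2$ is equivalent to
\[
\prod_{i=1}^n f_i(u p_{1i}+v p_{2i}) = \prod_{i=1}^n f_i(u p_{1i}) \prod_{i=1}^n f_i(v p_{2i}) \qquad \text{for all } u,v\in\Real.
\]
The next step is to take logarithms, which requires that each $f_i$ be nonvanishing; I would establish this by the standard argument (finite variance makes each $f_i$ twice continuously differentiable near the origin, and the factorization above propagates nonvanishing from a neighborhood of $0$ to all of $\Real$). Setting $\phi_i=\log f_i$ then yields the additive functional equation $\sum_i \phi_i(u p_{1i}+v p_{2i}) = A(u)+B(v)$, where $A(u)=\sum_i \phi_i(u p_{1i})$ and $B(v)=\sum_i \phi_i(v p_{2i})$.

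I would then solve this functional equation by finite differences: a mixed difference operator in $u$ and $v$ annihilates the right-hand side, and applying it repeatedly peels off the terms on the left one direction at a time. Because every $p_{1i}$ and $p_{2i}$ is nonzero, each direction $(p_{1i},p_{2i})$ is non-proportional to the axis directions $(1,0)$ and $(0,1)$ carried by $A$ and $B$ --- this is exactly where the nonzero-entries hypothesis enters --- and the ridge-function lemma forces each $\phi_i$ to be a polynomial. Marcinkiewicz's theorem (a characteristic function of the form $e^{\text{polynomial}}$ must have a quadratic exponent) then bounds the degree by two, so each $X_i$ is normal. Two indices whose direction vectors are proportional cannot be separated by differences; I would merge them into a single super-variable $Z=\sum_k \lambda_k X_k$ over the proportional class, conclude $Z$ is normal as above, and recover normality of each constituent $X_k$ (with $\lambda_k\neq 0$) from Cramér's decomposition theorem.

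The main obstacle is the functional-equation analysis itself --- establishing global nonvanishing of the $f_i$ so that logarithms exist, and carrying out the finite-difference peeling cleanly in the presence of possibly proportional coefficient directions. In contrast with the self-contained entropy-power arguments of the previous sections, this route leans on two external black boxes, the Marcinkiewicz and Cramér theorems; replacing them by an information-theoretic equality argument, if it is possible at all, seems to be the genuinely hard and interesting question.
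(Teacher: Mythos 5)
You should first note that the paper does not prove this theorem at all: it is quoted from the literature in the closing remarks, and the authors explicitly state that it is unclear whether their entropy-power argument extends to it. So there is no ``paper proof'' to match; your proposal is necessarily a different route, and your opening diagnosis of why the entropy method breaks down (each summand $(p_{1i}X_i, p_{2i}X_i)$ is singular in $\Real^2$, and with only two statistics there is no square orthogonal matrix to make the chain of entropy-power inequalities collapse to an equality) is a fair articulation of exactly the obstruction the authors gesture at.

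What you sketch is the classical characteristic-function proof (essentially the one in Kagan--Linnik--Rao, which the paper cites), and the outline is correct: the factorization $\prod_i f_i(up_{1i}+vp_{2i})=\prod_i f_i(up_{1i})\prod_i f_i(vp_{2i})$, passage to logarithms, the ridge-function equation $\sum_i\phi_i(up_{1i}+vp_{2i})=A(u)+B(v)$, finite differences to force each $\phi_i$ to be a polynomial, Marcinkiewicz to cap the degree at two, and the merge-then-Cram\'er step for indices with proportional direction vectors. You also correctly locate where the nonzero-entries hypothesis enters (no direction $(p_{1i},p_{2i})$ is proportional to the axis directions carried by $A$ and $B$). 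Two caveats on completeness rather than correctness: the global nonvanishing of the $f_i$ is a genuine lemma, not a routine consequence of smoothness near the origin --- characteristic functions of finite-variance laws can vanish, so the propagation must really be extracted from the functional equation itself; and the finite-difference peeling lemma (continuous solutions of the ridge equation with pairwise non-proportional directions are polynomials of controlled degree) deserves a precise statement, since the degree bound grows with $n$ and is exactly why Marcinkiewicz is unavoidable here. With those two lemmas supplied, the argument is complete. The trade-off against the paper's method is as you say: the entropy-power proof of the $n\times n$ orthogonal case is short and self-contained, while this route handles the much stronger two-statistic hypothesis at the cost of two deep external inputs (Marcinkiewicz and Cram\'er, both resting on Hadamard factorization); whether an information-theoretic equality argument can replace them is precisely the open question the paper ends on.
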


It is rather unclear if this strong statement can be proved by a similar information-theoretic argument in this paper.

\section*{Acknowledgments}
The authors appreciate Ioannis Kontoyiannis for pointing out existing information-theoretic proofs of the Kac--Bernstein theorem by \citet{Itoh1970} and \citet{Rioul2017}.

\bibliographystyle{plainnat}
\bibliography{ref}

\begin{thebibliography}{11}
\providecommand{\natexlab}[1]{#1}
\providecommand{\url}[1]{\texttt{#1}}
\expandafter\ifx\csname urlstyle\endcsname\relax
  \providecommand{\doi}[1]{doi: #1}\else
  \providecommand{\doi}{doi: \begingroup \urlstyle{rm}\Url}\fi

\bibitem[Bernstein(1941)]{Bernstein1941}
SN~Bernstein.
\newblock On a property which characterizes a gaussian distribution.
\newblock \emph{Proc. Leningrad Polytech. Inst.}, 217\penalty0 (3):\penalty0
  21--22, 1941.

\bibitem[Bryc(1995)]{Wlodzimierz1995}
Włodzimierz Bryc.
\newblock \emph{Normal distribution: Characterizations with applications},
  volume 100 of \emph{Lecture Notes in Statistics}.
\newblock Springer, 1995.

\bibitem[Cover and Thomas(2006)]{Cover--Thomas2006}
Thomas~M Cover and Joy~A Thomas.
\newblock \emph{Elements of information theory}.
\newblock John Wiley \& Sons, 2006.

\bibitem[Darmois(1953)]{Darmois1953}
George Darmois.
\newblock Analyse g{\'e}n{\'e}rale des liaisons stochastiques: etude
  particuli{\`e}re de l'analyse factorielle lin{\'e}aire.
\newblock \emph{Rev. Inst. Int. Stat.}, pages 2--8, 1953.

\bibitem[Itoh(1970)]{Itoh1970}
Yoshiaki Itoh.
\newblock The information theoretic proof of {K}ac's theorem.
\newblock \emph{Proc. Jpn. Acad.}, 46\penalty0 (3):\penalty0 283--286, 1970.

\bibitem[Kac(1939)]{Kac1939}
Mark Kac.
\newblock On a characterization of the normal distribution.
\newblock \emph{Am. J. Math.}, 61\penalty0 (3):\penalty0 726--728, 1939.

\bibitem[Kagan et~al.(1973)Kagan, Linnik, and Rao]{Kagan--Linnik--Rao1973}
A.M. Kagan, Yu.~V. Linnik, and C.R. Rao.
\newblock \emph{Characterization problems in mathematical statistics}.
\newblock Wiley, New York, 1973.

\bibitem[Rioul(2017)]{Rioul2017}
Olivier Rioul.
\newblock Optimal transportation to the entropy-power inequality.
\newblock In \emph{Proc. Inf. Theory Appl. Workshop}, pages 1--5. IEEE, 2017.

\bibitem[Shannon(1948)]{Shannon1948}
Claude~E Shannon.
\newblock A mathematical theory of communication.
\newblock \emph{Bell Syst. Tech. J.}, 27\penalty0 (3):\penalty0 379--423, 1948.

\bibitem[Skitovitch(1953)]{Skitovitch1953}
V.~P. Skitovitch.
\newblock On a property of the normal distribution.
\newblock \emph{DAN SSSR}, 89:\penalty0 217--219, 1953.

\bibitem[Stam(1959)]{Stam1959}
Aart~J Stam.
\newblock Some inequalities satisfied by the quantities of information of
  {F}isher and {S}hannon.
\newblock \emph{Inf. Contr.}, 2\penalty0 (2):\penalty0 101--112, 1959.

\end{thebibliography}
\end{document}